\def\E{{\rm E}}
\def\C{{\rm C}}
\newtheorem{theorem}{Theorem}[section]
\newtheorem{proposition}[theorem]{Proposition}
\newenvironment{proof}[1][Proof]{\begin{trivlist}
\item[\hskip \labelsep {\bfseries #1}]}{\end{trivlist}}
\newcommand{\qed}{\nobreak \ifvmode \relax \else
      \ifdim\lastskip<1.5em \hskip-\lastskip
      \hskip1.5em plus0em minus0.5em \fi \nobreak
      \vrule height0.75em width0.5em depth0.25em\fi}
\title{Quantization for Uplink Transmissions in Two-tier Networks with Femtocells}
\author{Chan Dai Truyen Thai and Marion Berbineau}
\begin{document}
\maketitle

\begin{abstract}
We propose two novel schemes to level up the sum--rate for a two-tier network with femtocell where the backhaul uplink and downlink connecting the Base Stations have limited capacity. The backhaul links are exploited to transport the information in order to improve the decoding of the macrocell and femtocell messages. In the first scheme, Quantize-and-Forward, the Femto Base Station (FBS) quantizes what it receives and forwards it to the Macro Base Station (MBS). Two quantization methods are considered: Elementary Quantization and Wyner-Ziv Quantization. In the second scheme, called Decode-and-Forward with Quantized Side Information (DFQSI) to be distinguished with the considered conventional Decode-and-Forward (DF) scheme. The DFQSI scheme exploits the backhaul downlink to quantize and send the information about the message in the macrocell to the FBS to help it better decode the message, cancel it and decode the message in the femtocell. The results show that there are interesting scenarios in which the proposed techniques offer considerable gains in terms of maximal sum rate and max minimal rate.
\end{abstract}
\section{Introduction}
\subsection{Motivation}
Femtocells are used to extend the service coverage to indoor areas and improve the capacity of cellular systems. In a two-tier network with femtocells, a femtocell Base Station (FBS) is connected to the Base Station of the macrocell (MBS) via a high-rate link. A two-tier network consists of interconnected macrocells and femtocells. If a macrocell and its femtocells use different frequency bands, then indoor areas can be well covered, but the overall spectral efficiency may deteriorate due to the need of additional bandwidth. In case they use the same frequency band, system capacity can be improved if the system is carefully designed to minimize the interference among different cells \cite{hist}.

There are several types of interference-related issues in a femtocell network. A cross-tier interference occurs between a macrocell and a femtocell, while intra-tier interference arises among femtocells when femtocells are installed close to each other. A known issue in systems with femtocells is the near-far problem, explained as follows. A femtocell is often installed around the edge of the macrocell since the signal from the macro BS is attenuated in this area, especially for indoor terminals. In this case, a nearby macrocell user needs to raise its transmit power in an uplink transmission in order to reach its remote MBS. Hence, it creates interference to the femtocell. For macrocell downlink transmission, the MBS also has to raise its transmit power to reach the users at the cell edge, which also increases its interference to the receivers in the femtocell.

\subsection{Related work and Contribution}
Femtocell networks have received a considerable research attention. \cite{asurvey} presents a thorough survey on femtocells including technical challenges and commercial aspects.~\cite{requirements} focuses on femtocell requirements and network architecture. Other discussed femtocell standardization history~\cite{standardization}\cite{standardization2}, the necessity of femtocells and its advantages and disadvantages~\cite{industry}.

The existing works present several methods to cope with the interference in a femtocell network, such as: radio resource management, power control, access management, femtocell location management, antenna and other solutions. As in a radio resource management solution, carriers (in CDMA~\cite{autonomous}) or sub-carriers (in OFDMA~\cite{ofdma}\cite{avoidance}) are manually or dynamically arranged to optimally reduce the interference~\cite{spectrum}. However, power control in femtocell or macrocell is often combined with frequency channel management to reduce interference~\cite{optimization}\cite{hspa}\cite{downlink}. In one of the hybrid solutions, interference-limited coverage area (ILCA) of a femtocell is defined as an area around the FBS in which interference to a femtocell is lower than an adjustable threshold. ILCA depends on the position of the femtocell within the macrocell. If the ILCA radius of a femtocell is larger than a certain value the same channel with the macrocell is used. Otherwise another channel is used~\cite{hybrid}. In~\cite{uplink} different methods for interference mitigation for uplinks can be combined, such as receive antenna sectorization, two-tier network spectrum split, femtocell exclusion region, and a tier selection based femtocell handoff. It has been shown that completely closed or completely open access policies are not optimal~\cite{dealing}. Antenna related methods are also proposed in which multi-antenna~\cite{multiantenna} or switching among antennas~\cite{switching} is considered.

Other interference mitigation methods have also been explored, such as user-assisted coverage and interference optimization~\cite{user-assisted}, mobility management~\cite{mobility}, minimizing pilot leakage outside the femtocell building~\cite{leakage}, considering statistics of time periods for a signal travel from a macro user to its base station and a FBS~\cite{statistics}, pilot sensing to select frequency bands for femtocells among frequency bands of macro networks~\cite{pilotsensing}.

A scheme coping with the unreliability of the backhaul link is proposed in \cite{robust}. In the scheme, the message from each of the femto users is divided into layers. The number of message layers delivered to the MBS depends on the current status of the backhaul link. The MBS uses Multiple Access Channel and Successive Interference Cancellation to decode the desired messages. However, the paper considers real rather than complex channel gains. In a previous publication \cite{chanthai}, we also propose to use the backhaul link to send the information of interference to the FBS so that it can better decode the received signal.

If at a certain time, there are an uplink and a downlink for two users using the same relay node. The transmissions can be combined in an overhearing scheme \cite{overhearing_fansun_comlett,dof_fansun_spawc, mmse_fansun_siglett1}. On the other hand, if there are two-hop users using different relay nodes: one has an uplink and one has a downlink, we have a multi-way scheme \cite{fourway_huaping_spawc, fourway_huaping_lett, mmse_fansun_siglett2}. If we use the CDR, overhearing and multi-way schemes as described above to respective cases instead of the corresponding conventional schemes, several time slots are saved and a significant improvement is gained.

Regarding Quantize-and-Forward relaying, \cite{qf1,qf2} consider in-band relaying where the destination receives the signals from the sources and from the relay in the same band whereas the relaying of the FBS can be considered as out-of-band relaying \cite{outband}. In this paper we propose novel quantization schemes for out-of-band relaying. Specifically, we propose two schemes to exploit the backhaul links between the MBS and the FBS and facilitate the decoding of macro and femto uplink messages. In the first scheme, Quantize and Forward (QF), the FBS quantizes the received signal and forwards it to the MBS. Two quantization methods are considered: Elementary Quantization (EQ) and Wyner-Ziv Quantization (WZQ). The difference between EQ and WZQ is that in WZQ, the FBS takes into account the signal that the MBS receives from the macro user. The second proposed scheme is a Decode-and-Forward (DF) scheme based on a conventional DF scheme. However, in the proposed scheme, we exploit the backhaul downlink to facilitate the decoding of the uplink message in the femtocell and thus enable to increase the achievable rates. In the following sections, we present the conventional DF scheme, denoted as DF, as well as the proposed DF scheme, denoted as DFQSI. All schemes will be compared in terms of the maximum sum--rate and the minimum rate of two users.

The rest of the paper is organized as follows. Section \ref{system_model} introduces the system model. Section \ref{uplink_only} presents the DF and QF schemes for the case with backhaul uplink only while the QF scheme in the case with backhaul uplink and downlink is presented in Section \ref{uplink_and_downlink}. Section \ref{performance_metric} introduces the performance metric used. Section \ref{numerical_results} presents and analyses the numerical results. Section \ref{conclusion} concludes the paper.
\section{System Model}\label{system_model}
\begin{figure}
\centering
\includegraphics[width=0.5\columnwidth]{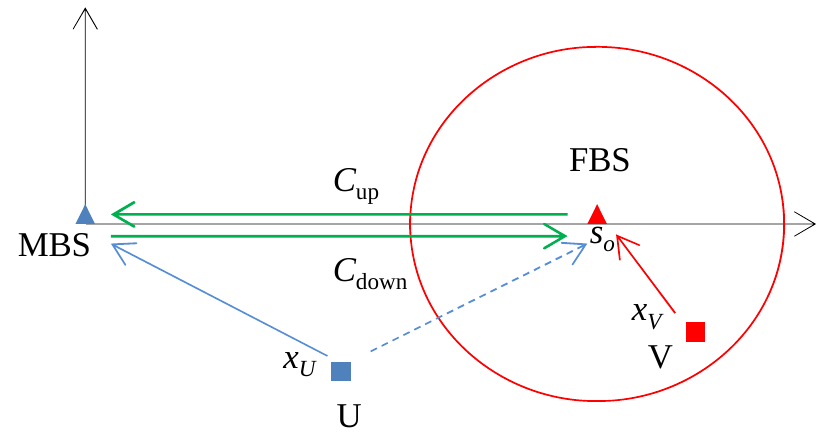}
\caption{We consider a two-tier network with one macrocell and one femtocell each with one user.}
\label{network1}
\end{figure}
Consider a two-tier network with one macrocell and one femtocell as in Fig. \ref{network1}. The MBS and the FBS are located at $(0,0)$ and $(s_o,0)$ of the coordinate system. The macrocell user U and the femtocell user V are randomly located inside a circle with the corresponding BS as the center and radius $r_M$ and $r_F$ respectively. The uplink messages of the users U and V are denoted by $x_U$ and $x_V$, and they are destined to the MBS B and FBS F, respectively. The MBS and FBS are connected through a two-way backhaul link with capacity $C_\mathrm{up}$  and $C_\mathrm{down}$ (b/s/Hz) which are known at all stations. We assume that the uplink message from the femto user V to the FBS will be forwarded to the final destination via the MBS.

The reciprocal channel between station $i$ and $j$ is denoted by $h_{ij}$ and known at all stations, $~i\in\{U, V\}, ~j\in\{F, B\}$. The received signal and Additive White Gaussian Noise (AWGN) at station $j$ are denoted by $y_{j}$ and $z_{j}\sim\mathcal{CN}(0, \sigma^2)$ respectively. We therefore consider that a channel coefficient is given by $h=\sqrt{\frac{K}{d^{\alpha}}}g$, $K=\left(\frac{\lambda}{4\pi}\right)d_{o}^{\alpha-2}$, where $\alpha$ is the path loss exponent; $K$ is a constant depending on the used frequency and the reference distance $d_{o}$; $d$ is the distance from the source to the destination; $10\log_{10}g$ is the log-normal shadowing coefficient \cite{uplink}. The distance from the femto user to the MBS is long. In addition, the femto user is very near the FBS, its transmit power is adjusted to a very low level. From all this, it is viable to assume that $h_{VB}=0$ for the rest of the paper.

All wireless transmissions are in the same frequency band. Every station has a single antenna. We assume perfect power control for the transmissions from the macro user U to its BS and from the femto user V to its BS. By this assumption, the transmit power is equal to the inverse of the channel magnitude such that the received signal at the corresponding receiver has a power of $P_R$ \cite{uplink}.

We use $\triangleq$ to define a new variable or function. Denote $\gamma_{ij} \triangleq \frac{P_i|h_{ij}|^2}{\sigma^2}$, where $P_i$ is the transmit power of station $i$, $i\in\{U, V\}, ~j\in\{F, B\}$, and $\C(\gamma_{ij}) \triangleq \log_2(1 + \gamma_{ij})$ as the maximal achievable rate for a transmission over the channel $h_{ij}$ without interference. The signals are denoted by using small letters $(x,y,z)$ and the correspondent random variables are denoted using capitalized letters $(X,Y,Z)$. When a station quantizes signal $x$ and sends the quantized version to another station over a backhaul link, the received signal at the receiver is written $\hat{x} = x + z_Q$ in which the quantization noise $z_Q$ is assumed to be Gaussian \cite{secrecy}. The variance of the quantization noise depends on the capacity of the backhaul link.
\section{Relaying with the Backhaul Uplink Only}\label{uplink_only}
This section presents two relaying modes, DF and QF, when only the backhaul uplink is used ($C_\mathrm{up} > 0$, $C_\mathrm{down} = 0$). The DF scheme in this section is the conventional DF. The DFQSI scheme refers to the DF scheme when both backhaul uplink and downlink are used and will be presented in the next section.

When U sends message $x_U$ to the MBS and V sends message $x_V$ to the FBS. The FBS and MBS respectively receive the signals
\begin{equation}\label{yfb}\left\{
\begin{array}{l}
	y_F = h_{UF}x_U + h_{VF}x_V + z_F,\\
	y_B = h_{UB}x_U + z_B.
\end{array}\right.
\end{equation}
where $z_F$ and $z_B$ are AWGN at the FBS and MBS respectively. The achievable rates $R_U, R_V$ are the uplink rates from U and V respectively that the MBS can decode with help from the FBS over the backhaul link.

The transmission rates of messages $x_U$ and $x_V$ can be calculated and selected in several ways. First, the MBS or the FBS can decode $x_U$ treating the contribution of $x_V$ as noise, cancels the contribution of $x_U$ and decodes $x_V$. This decoding order is denoted as UV. The opposite order, VU, is decoding $x_V$ treating the contribution of $x_U$ as noise first. These two decoding orders give different transmission rate pairs for $x_U$ and $x_V$. Knowing all channels, all stations are able to calculate the rates of $x_U$, $x_V$ before the transmissions start. Which order is selected depends on the requirement of the rates or the priority of the users. In DF, decoding happens at the FBS. Hence, UV or VU refers to the two decoding order at the FBS. In the DF scheme, the FBS decodes $x_V$ and forwards it to the MBS over the backhaul uplink. In the QF scheme, the FBS quantizes the received signal and forwards it to the MBS.
\subsection{Decode and Forward}\label{daf1}
In DF, the message $x_U$ is decoded at FBS and forwarded to the MBS through the backhaul link. Hence the transmission rate is limited to $C_\mathrm{up}$. We consider the following two cases.
\begin{itemize}
\item The FBS decodes $x_U$ from $y_F$ treating $x_V$ as noise. The achievable rate of $x_U$ therefore satisfies
\begin{equation}
R_U \leq \log_2\left(1 + \frac{|h_{UF}|^2}{|h_{VF}|^2 + \sigma^2}\right)\ = \C\left(\frac{\gamma_{UF}}{\gamma_{VF} + 1}\right).
\end{equation}
Thus $x_U$ is cancelled and $x_V$ is decoded with rate $R_V \leq \C(\gamma_{VF})$. On the other hand, at the MBS, $x_U$ is decoded with constraint $R_U \leq \C(\gamma_{UB})$. With no quantization, the message $x_V$ sent over the backhaul uplink with capacity $C_\mathrm{up}$ has to satisfy $R_V \leq C_\mathrm{up}$. Finally,
\begin{equation}
R^\mathrm{DF-UV}_V = \min\left\{C_\mathrm{up},\C\left(\gamma_{VF}\right)\right\}.
\end{equation}
\begin{equation}
R^\mathrm{DF-UV}_U = \min\left\{\C\left(\frac{\gamma_{UF}}{\gamma_{VF} + 1}\right), \C\left(\gamma_{UB}\right)\right\},
\end{equation}
\item The FBS decodes $x_V$ from $y_F$ treating $x_U$ as noise. The achievable rate of $x_V$ therefore satisfies
\begin{equation}
R_V \leq \log_2\left(1 + \frac{|h_{VF}|^2}{|h_{UF}|^2 + \sigma^2}\right)\ = \C\left(\frac{\gamma_{VF}}{\gamma_{UF} + 1}\right)
\end{equation}
The uplink rate of the femto user is selected as
\begin{equation}\label{rv_df1_vu}
R^\mathrm{DF-VU}_V = \min\left\{C_\mathrm{up},\C\left(\frac{\gamma_{VF}}{\gamma_{UF} + 1}\right)\right\}.
\end{equation}
The uplink rate of the macro user is therefore selected as
\begin{equation}\label{ru_df1_vu}
R^\mathrm{DF-VU}_U = \C\left(\frac{|h_{UB}|^2}{\sigma^2}\right) = \C\left(\gamma_{UB}\right).
\end{equation}
\end{itemize}

\subsection{Quantize and Forward}\label{qaf}
In this part, the FBS quantizes its received signal $y_F$ to $\hat{y}_F$ and forwards it to the MBS. In EQ, the signal is quantized by not taking into account that the MBS has a side information through the observation of $y_B$. Therefore, $\hat{y}_F$ contains a redundant information. Another option is the WZQ which compresses the signals by taking into account that $y_B$ is already available at the MBS. Eventually this more efficient compression will result in higher achievable rates for U and V \cite{secrecy}.

The MBS receives from the FBS
\begin{equation}\label{yhatf} \hat{y}_F = y_F + z_{Q_Y} = h_{UF}x_U + h_{VF}x_V + z_F + z_{Q_Y} \end{equation}
where $z_{Q_Y}$ is the quantization noise. We consider two possible decoding orders.

\subsubsection{QF-UV}

\begin{proposition}\label{prop_qf_uv}
If the decoding order at the MBS is UV, the achievable rates for macro and femto uplinks are
\begin{equation}\label{ruv_qf_uv}\left\{
\begin{array}{l}
	R^\mathrm{QF-UV}_U = \C\left(\gamma_{UB} + \frac{\gamma_{UF}}{\gamma_{VF} + 1 + \beta}\right)\\
	R^\mathrm{QF-UV}_V = \C\left(\frac{\gamma_{VF}}{1 + \beta}\right)
\end{array}\right.
\end{equation}
where $\beta = \frac{\sigma^2_Q}{\sigma^2} = \frac{\gamma_{UF} + \gamma_{VF} + 1}{2^{C_\mathrm{up}} - 1}$ in case of EQ and $\beta = \frac{\gamma_{VF} + 1}{2^{C_\mathrm{up}} - 1} + \frac{\gamma_{UF}}{(2^{C_\mathrm{up}} - 1)(\gamma_{UB} + 1)}$ in case of WZQ.
\end{proposition}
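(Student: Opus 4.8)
The plan is to treat the MBS as a receiver with two observation branches, $y_B$ (direct from U) and $\hat{y}_F$ (relayed from the FBS), to carry out successive decoding in the order UV, and then to pin down the normalised quantization-noise level $\beta=\sigma^2_Q/\sigma^2$ separately for EQ and WZQ via the appropriate rate-distortion (test-channel) identity. Throughout I use Gaussian codebooks for $x_U,x_V$ and a Gaussian quantizer operating at the backhaul rate $C_\mathrm{up}$.

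First I would derive the two rate formulas as functions of the still-unknown $\beta$. Stacking \eqref{yhatf} with $y_B=h_{UB}x_U+z_B$ gives a single-input two-output channel for $x_U$. Because $h_{VB}=0$, the branch $y_B$ contains no $x_V$, and since $z_B$ is independent of $h_{VF}x_V+z_F+z_{Q_Y}$, the interference-plus-noise covariance across the two branches is diagonal, with entries $\sigma^2$ and $P_V|h_{VF}|^2+\sigma^2+\sigma^2_Q$. For a Gaussian input the maximal rate of such a vector channel is $\log_2(1+P_U\mathbf{h}^H\Sigma^{-1}\mathbf{h})$, and with a diagonal $\Sigma$ this collapses to the sum of the per-branch SNRs. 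Normalising by $\sigma^2$ turns these into $\gamma_{UB}$ and $\gamma_{UF}/(\gamma_{VF}+1+\beta)$, yielding $R^\mathrm{QF-UV}_U$. After $x_U$ is cancelled only $\hat{y}_F$ carries $x_V$, the residual noise has variance $\sigma^2+\sigma^2_Q$, and the resulting SNR $\gamma_{VF}/(1+\beta)$ gives $R^\mathrm{QF-UV}_V$. This part is routine once the diagonal structure is noticed.

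Next I would fix $\beta$ for EQ. Using the test channel $\hat Y_F=Y_F+Z_{Q_Y}$, the backhaul cost is $I(Y_F;\hat Y_F)=\log_2(1+\sigma_Y^2/\sigma^2_Q)$ (complex signals, no $\tfrac12$), where $\sigma_Y^2=\mathrm{Var}(Y_F)=P_U|h_{UF}|^2+P_V|h_{VF}|^2+\sigma^2$. Setting this equal to $C_\mathrm{up}$ and solving gives $\sigma^2_Q=\sigma_Y^2/(2^{C_\mathrm{up}}-1)$; dividing by $\sigma^2$ and recognising $\sigma_Y^2/\sigma^2=\gamma_{UF}+\gamma_{VF}+1$ produces the stated EQ value of $\beta$.

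For WZQ the FBS exploits that the MBS already holds the side information $y_B$, so the backhaul cost drops to the conditional mutual information $I(Y_F;\hat Y_F\mid Y_B)=\log_2(1+\sigma_{Y_F|Y_B}^2/\sigma^2_Q)$, where $\sigma_{Y_F|Y_B}^2$ is the MMSE variance of $Y_F$ given $Y_B$. The one genuinely computational step, and the place I expect to have to be careful, is evaluating this conditional variance: with $\mathrm{Cov}(Y_F,Y_B)=h_{UF}h_{UB}^*P_U$ and $\mathrm{Var}(Y_B)=P_U|h_{UB}|^2+\sigma^2$, one gets $\sigma_{Y_F|Y_B}^2=\sigma_Y^2-|h_{UF}|^2|h_{UB}|^2P_U^2/\mathrm{Var}(Y_B)$; normalising and using $\gamma_{UF}-\gamma_{UF}\gamma_{UB}/(\gamma_{UB}+1)=\gamma_{UF}/(\gamma_{UB}+1)$ collapses it to $\sigma_{Y_F|Y_B}^2/\sigma^2=\gamma_{UF}/(\gamma_{UB}+1)+\gamma_{VF}+1$. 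Setting $I(Y_F;\hat Y_F\mid Y_B)=C_\mathrm{up}$ and solving for $\sigma^2_Q$ then splits precisely into the two terms of the claimed WZQ expression for $\beta$. The only subtlety worth flagging is the bookkeeping of the complex-Gaussian differential entropies (and the Markov chain $Y_B-Y_F-\hat Y_F$ that justifies $I(Y_F;\hat Y_F\mid Y_B)=\log_2(1+\sigma_{Y_F|Y_B}^2/\sigma^2_Q)$), so that the $2^{C_\mathrm{up}}-1$ rather than $2^{C_\mathrm{up}}$ denominator emerges correctly in both cases.
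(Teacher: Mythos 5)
Your proposal is correct and reaches both rate expressions and both values of $\beta$ that the paper claims; the overall skeleton (derive $R_U,R_V$ as functions of an unknown $\beta$, then fix $\beta$ by equating the quantizer's rate to $C_\mathrm{up}$) matches the paper exactly, and your EQ step is identical to the paper's. The one place you genuinely diverge is the WZQ step. The paper evaluates $I(Y_F;\hat Y_F\mid Y_B)$ by a chain of mutual-information identities, writing it as $I(\hat Y_F;Y_F)-I(\hat Y_F;Y_B,X_U)+I(X_U;\hat Y_F,Y_B)-I(X_U;Y_B)$ (using the Markov chain $Y_B - Y_F - \hat Y_F$ to kill $I(\hat Y_F;Y_B\mid Y_F)$), and then substitutes four separately recognizable capacity-type expressions. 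You instead compute the single conditional differential entropy directly: $I(Y_F;\hat Y_F\mid Y_B)=\log_2\bigl(1+\sigma^2_{Y_F|Y_B}/\sigma^2_Q\bigr)$ with the Gaussian MMSE variance $\sigma^2_{Y_F|Y_B}=\sigma^2_Y-|\mathrm{Cov}(Y_F,Y_B)|^2/\mathrm{Var}(Y_B)$, whose normalized value $\gamma_{UF}/(\gamma_{UB}+1)+\gamma_{VF}+1$ immediately yields the stated $\beta$. Your route is shorter and makes the source of the $(\gamma_{UB}+1)$ factor transparent (it is the residual uncertainty about $x_U$ after observing $y_B$), at the cost of leaning on joint Gaussianity of $(Y_F,Y_B,\hat Y_F)$; the paper's identity-based route keeps every intermediate quantity interpretable as a link rate and would survive in settings where the direct second-moment calculation is less clean. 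Both give the same $\beta$, as one checks by simplifying the paper's combined argument $\frac{(\gamma_{VF}+1)(\gamma_{UB}+1)+\gamma_{UF}}{\beta(\gamma_{UB}+1)}$.
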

\begin{proof}
From $y_B$ and $\hat{y}_F$, the MBS first decodes $x_U$. $R_U$ therefore satisfies
\begin{equation}
R_U \leq \log_2\left(1 + \frac{|h_{UB}|^2}{\sigma^2} + \frac{|h_{UF}|^2}{|h_{VF}|^2 + \sigma^2 + \sigma^2_{Q_Y}}\right) = \log_2\left(1 + \gamma_{UB} + \frac{\gamma_{UF}}{\gamma_{VF} +1 + \beta}\right).
\end{equation}
with $\beta = \frac{\sigma^2_{Q_Y}}{\sigma^2}$. The MBS cancels the contribution of $x_U$ in $\hat{y}_F$ and decodes $x_V$ with
\begin{equation}
R_V \leq \log_2\left(1 + \frac{|h_{VF}|^2}{\sigma^2 + \sigma^2_{Q_Y}}\right) = \log_2\left(1 + \frac{\gamma_{VF}}{1 + \beta}\right).
\end{equation}
Now we calculate $\beta$ using two methods of quantizations.
\begin{itemize}
\item For EQ, the backhaul link has to carry an amount of information of $I(Y_F; \hat{Y}_F) =  \C{\left(\frac{\E[|y_F|^2]}{\sigma^2_{Q_Y}}\right)}$. Setting this to $C_\mathrm{up}$, we have the quantization noise as
\begin{equation}
\sigma^2_{Q_Y} =  \frac{\E[|y_F|^2]}{2^{C_\mathrm{up}} - 1} = \frac{|h_{UF}|^2 + |h_{VF}|^2 + \sigma^2}{2^{C_\mathrm{up}} - 1} = \sigma^2 \frac{\gamma_{UF} + \gamma_{VF} + 1}{2^{C_\mathrm{up}} - 1}
\end{equation}
Thus
\begin{equation}
\beta =  \frac{\sigma^2_{Q_Y}}{\sigma^2} = \frac{\gamma_{UF} + \gamma_{VF} + 1}{2^{C_\mathrm{up}} - 1}
\end{equation}

\item For WZQ, the FBS quantizes $y_F$ taking into account that the MBS knows $y_B$, therefore the capacity of the backhaul link $C_\mathrm{up}$ is allocated to
\begin{equation}\label{I_Q}
\begin{array}{lll}
		I(Y_F; \hat{Y}_F|Y_B) &=& I(\hat{Y}_F; Y_F, Y_B) - I(\hat{Y}_F;Y_B)\\
		
		&=& I(\hat{Y}_F;Y_F) + I(\hat{Y}_F;Y_B|Y_F) - I(\hat{Y}_F;Y_B) \\
		
		&=& I(\hat{Y}_F;Y_F) - I(\hat{Y}_F;Y_B,X_U) + I(\hat{Y}_F;X_U|Y_B) \\
		
		&=& I(\hat{Y}_F;Y_F) - I(\hat{Y}_F;Y_B,X_U) + I(X_U;\hat{Y}_F,Y_B) - I(X_U;Y_B).
\end{array}
\end{equation}
The third equality is due to $I(Y_B; \hat{Y}_F|Y_F)$ = 0 since $Y_B - Y_F - \hat{Y}_F$ is a Markov chain. We have
\begin{equation}
I(\hat{Y}_F;Y_F) = \C\left(\frac{|h_{UF}|^2 + |h_{VF}|^2 + \sigma^2}{\sigma^2_{Q_Y}}\right) = \log_2\left(1 + \frac{\gamma_{UF} + \gamma_{VF} + 1}\beta\right),
\end{equation}
\begin{equation}
I(\hat{Y}_F;Y_B,X_U) = I(\hat{Y}_F;X_U) = \C\left(\frac{\gamma_{UF}}{\gamma_{VF} + 1 + \beta}\right),
\end{equation}
\begin{equation}
I(X_U;\hat{Y}_F,Y_B) = \C\left(\gamma_{UB} + \frac{\gamma_{UF}}{\gamma_{VF} + 1 + \beta}\right),
\end{equation}
\begin{equation}
I(X_U;Y_B) = \C\left(\gamma_{UB}\right).
\end{equation}
Substituting the above to (\ref{I_Q}), we have
\begin{equation}
I(Y_F; \hat{Y}_F|Y_B) = \C\left(\frac{(\gamma_{VF} + 1)(\gamma_{UB} + 1) + \gamma_{UF}}{\beta(\gamma_{UB} + 1)}\right).
\end{equation}
We use the uplink backhaul with capacity $C_\mathrm{up}$ to transmit the information above therefore $I(Y_F; \hat{Y}_F|Y_B) = C_\mathrm{up}$. Solving the equation, we have
\begin{equation}
\beta = \frac{\gamma_{VF} + 1}{2^{C_\mathrm{up}} - 1} + \frac{\gamma_{UF}}{(2^{C_\mathrm{up}} - 1)(\gamma_{UB} + 1)}.
\end{equation}
\end{itemize}
\end{proof}
\subsubsection{QF-VU}
\begin{proposition}\label{prop_qf_vu}
If the decoding order at the MBS is $x_V$, $x_U$, the achievable rates for macro and femto uplinks are
\begin{equation}\label{ruv_qf_vu}\left\{
\begin{array}{l}
	R^\mathrm{QF-VU}_U = \C\left(\gamma_{UB} + \frac{\gamma_{UF}}{1 + \beta}\right)\\
	R^\mathrm{QF-VU}_V = \C\left(\frac{\gamma_{VF}}{\frac{\gamma_{UF}}{\gamma_{UB} + 1} + 1 + \beta}\right)
\end{array}\right.
\end{equation}
where $\beta = \frac{\sigma^2_{Q_Y}}{\sigma^2} = \frac{\gamma_{UF} + \gamma_{VF} + 1}{2^{C_\mathrm{up}} - 1}$ in case of EQ and $\beta = \frac{\gamma_{VF} + 1}{2^{C_\mathrm{up}} - 1} + \frac{\gamma_{UF}}{(2^{C_\mathrm{up}} - 1)(\gamma_{UB} + 1)}$ in case of WZQ.
\end{proposition}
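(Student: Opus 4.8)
The plan is to mirror the proof of Proposition~\ref{prop_qf_uv}, changing only the decoding order at the MBS, and to reuse without modification the quantization-noise analysis that produces $\beta$. My first observation is that the quantizer at the FBS is determined by $Y_F$ together with the side information $Y_B$ available at the MBS, and is \emph{independent} of the order in which the MBS subsequently peels off $x_U$ and $x_V$. Consequently the identities $I(Y_F;\hat{Y}_F)=C_\mathrm{up}$ for EQ and $I(Y_F;\hat{Y}_F\mid Y_B)=C_\mathrm{up}$ for WZQ carry over verbatim, so the two expressions for $\beta$ are exactly those already derived in Proposition~\ref{prop_qf_uv} and need no recomputation. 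This is why the $\beta$ formulas in the two propositions coincide.

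Next I would compute $R_V$, now decoded \emph{first} from the pair $(\hat{y}_F,y_B)$ while treating $x_U$ as noise. The clean route is the chain rule $I(X_V;\hat{Y}_F,Y_B)=I(X_V;Y_B)+I(X_V;\hat{Y}_F\mid Y_B)$; since $h_{VB}=0$, the signal $y_B=h_{UB}x_U+z_B$ is independent of $X_V$, so the first term vanishes and $R_V=I(X_V;\hat{Y}_F\mid Y_B)$. The crux—and the step I expect to be the only delicate one—is that conditioning on $Y_B$ does not leave the $x_U$ interference untouched: because $Y_B$ is a noisy look at $x_U$, the MBS subtracts its MMSE estimate, and the residual conditional variance shrinks by the factor $\gamma_{UB}+1$, i.e.\ $\mathrm{Var}(X_U\mid Y_B)=P_U/(\gamma_{UB}+1)$. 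Feeding the reduced interference power $\sigma^2\gamma_{UF}/(\gamma_{UB}+1)$, together with the thermal noise $\sigma^2$ and the quantization noise $\sigma^2_{Q_Y}=\beta\sigma^2$, into the SINR for $x_V$ yields $R^\mathrm{QF\textrm{-}VU}_V=\C\bigl(\gamma_{VF}/(\tfrac{\gamma_{UF}}{\gamma_{UB}+1}+1+\beta)\bigr)$, matching the claim.

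Finally I would cancel the decoded $x_V$ from $\hat{y}_F$, which leaves the MBS with two conditionally independent Gaussian observations of $x_U$: the direct link $y_B=h_{UB}x_U+z_B$ with signal-to-noise ratio $\gamma_{UB}$, and the cleaned relayed signal $\hat{y}_F-h_{VF}x_V=h_{UF}x_U+z_F+z_{Q_Y}$ with signal-to-noise ratio $\gamma_{UF}/(1+\beta)$. Maximal-ratio combining adds the two ratios, giving $R^\mathrm{QF\textrm{-}VU}_U=\C\bigl(\gamma_{UB}+\gamma_{UF}/(1+\beta)\bigr)$, which completes the derivation. In short, the $\beta$-reuse and the $R_U$ combining step are routine, and the interference-reduction identity $\mathrm{Var}(X_U\mid Y_B)=P_U/(\gamma_{UB}+1)$ in the $R_V$ step is the single place where a short MMSE argument must be supplied.
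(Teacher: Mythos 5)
Your proposal is correct and follows essentially the same route as the paper: the paper's own proof is a one-line appeal to MMSE-SIC (decode $x_V$ first from $(y_B,\hat{y}_F)$ with $\hat{y}_F$ quantized exactly as in Proposition~\ref{prop_qf_uv}), and your argument is precisely that MMSE-SIC computation written out in full, including the key conditional-variance identity $\mathrm{Var}(X_U\mid Y_B)=P_U/(\gamma_{UB}+1)$ that yields the $\frac{\gamma_{UF}}{\gamma_{UB}+1}$ term and the observation that the quantizer, hence $\beta$, is unaffected by the decoding order. You supply the details the paper defers to its citation, but the underlying method is identical.
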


\begin{proof}
From $y_B$ and $\hat{y}_F$, the MBS first decodes $x_V$ using MMSE-SIC as in \cite{icc2011} to get $R^\mathrm{QF-VU}_V$ in (\ref{ruv_qf_vu}). The quantization and other steps are conducted similarly to the ones of Proposition \ref{prop_qf_uv}.
\end{proof}
\section{Relaying with Backhaul Uplink and Downlink}\label{uplink_and_downlink}
In this part we introduce a different transmission scheme based on the following observation. By exploiting the backhaul downlink, the MBS can send some information about $x_U$ or $y_B$ to the FBS in order to help FBS to decode $x_V$ when DF is used or quantize $y_F$ in case of QF. However, when the FBS quantizes $y_F$ using WZQ, it does not need to have $y_B$. With Gaussian signals, even knowning $y_B$ competely at the transmitter side cannot contribute to a better quantization of $y_F$ \cite{kramer}. We therefore do not consider another version of QF in this section but only DF.

The MBS first decodes $x_U$ from $y_B$. Then it quantizes and sends $\hat{x}_U$ to the FBS; then FBS uses this side information in order to decode $x_U$, which enables to achieve higher rates:
\begin{equation}
\hat{x}_U = x_U + z_{Q_U}.
\end{equation}
Denote $\gamma_{Q_U} = \frac{|x_U|^2}{\sigma^2_{Q_U}}$ as the SNR of the received signal over the downlink backhaul at the FBS. The downlink backhaul with capacity $C_\mathrm{down}$ is used to send an amount of information which is equal to
\begin{equation}
I(x_U, \hat{x}_U) = \frac{|x_U|^2}{\sigma^2_{Q_U}}.
\end{equation}
Thus
\begin{equation}
\gamma_{Q_U} = 2^{C_\mathrm{down}} - 1.
\end{equation}
Combining $y_F$, in which $x_V$ is treated as noise, and $\hat{x}_U$, the FBS decodes $x_U$ with a rate
\begin{equation}
R_U \leq \C\left(\gamma_{Q_U}+ \frac{\gamma_{UF}}{\gamma_{VF} + 1 + \beta}\right)
\end{equation}
in which $\gamma_{Q_U}$ depends on how the quantization is conducted. Cancelling the contribution of $x_U$ in $y_F$, the FBS decodes and forwards $x_V$ to the MBS. Similar to Section \ref{daf1}, we have
\begin{equation}
R^\mathrm{DFQSI-UV}_V = \min\left\{C_\mathrm{up},\C\left(\gamma_{VF}\right)\right\}.
\end{equation}
\begin{equation}
R^\mathrm{DFQSI-UV}_U = \min\left\{\C\left(\gamma_{Q_U} + \frac{\gamma_{UF}}{\gamma_{VF} + 1}\right), \C\left(\gamma_{UB}\right)\right\}.
\end{equation}
In the second case, when the FBS decodes $x_V$ treating $x_U$ as noise, the scheme is exactly the same as the scheme in section \ref{daf1} therefore $R^\mathrm{DFQSI-VU}_U = R^\mathrm{DF-VU}_U$ and $R^\mathrm{DFQSI-VU}_V = R^\mathrm{DF-VU}_V$.
\section{Performance Metric}
\label{performance_metric}
In any scheme, two decoding orders lead to two rate points $(R_U, R_V)$. The two points can make a pentagon or rectangular rate region as shown in Fig. \ref{rate_region}. Any point on the line connecting two points can be achieved applying a suitable time sharing. Therefore, the \textit{maximum sum-rate} corresponds to the point with a higher sum-rate.

The \textit{maximum of the minimum rate of two users} corresponds to the intersection of the line $R_U = R_V$ and the outer line of the rate region. This can be proved by reasoning as follows. Assume that the intersection A ($R^A, R^A$) of the line $R_U = R_V$ and the outer line of the rate region is not the point B ($R^B_{U}, R^B_{V}$) of the maximum minimum rate. This means that when moving along the outer line of the rate region from A we can see B such that $R^B_{U} > R^A$ and $R^B_{V}> R^A$. It is equivalent to the fact that the rate region is not a convex polygon. However, a rate region is always a convex hull of all achievable rate points and therefore a polygon. Consequently, the assumption is wrong and thus A and B coincide.
\begin{figure}
\centering
\includegraphics[width=1\columnwidth]{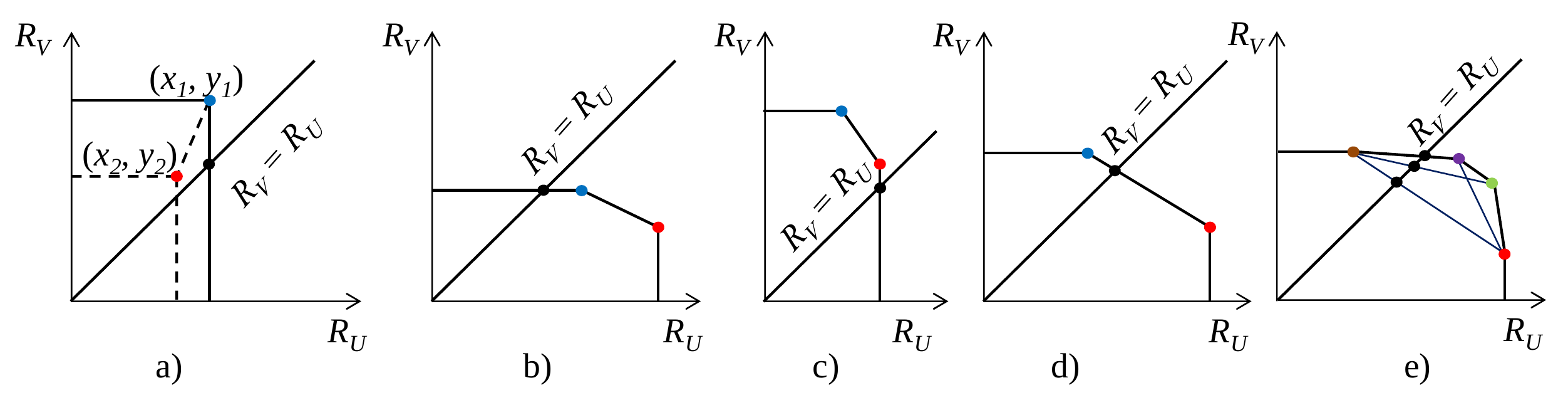}
\caption{To determine the maximum minimum rate of two users, several cases of rate regions are considered.}
\label{rate_region}
\end{figure}
\begin{proposition}\label{max_min_rate}
If a rate region of a scheme is determined by two points ($x_1$, $y_1$) and ($x_2$, $y_2$), the maximum of the minimum of two rates is given by
\begin{equation}\label{max_min_rate_eq}
\begin{array}{l}
\max_{(R_U,R_V)} \min(R_U,R_V)\\ = \mathrm{f}(x_1, x_2, y_1, y_2) \triangleq \left\{
\begin{array}{lll}
\min(\max(x_1, x_2), \max(y_1, y_2)) &\mbox{if}& (x_1 - x_2)(y_1 - y_2) > 0\\
\max(\min(x_1, y_1), \min(x_2, y_2)) &\mbox{if}& \left\{
\begin{array}{l}
(x_1 - x_2)(y_1 - y_2) < 0\\
(x_1 - y_1)(x_2 - y_2) > 0
\end{array}\right.\\
\frac{x_1y_2 - y_1x_2}{y_2 - y_1 + x_1 - x_2} &\mbox{if}& \left\{
\begin{array}{l}
(x_1 - x_2)(y_1 - y_2) < 0\\
(x_1 - y_1)(x_2 - y_2) < 0.
\end{array}\right.\\
\end{array}\right.
\end{array}
\end{equation}
\end{proposition}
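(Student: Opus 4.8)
The plan is to reduce the two-dimensional optimization to a one-dimensional search along the Pareto frontier of the rate region, and then to split into cases according to which edge of that frontier meets the diagonal line $R_U = R_V$. The preceding paragraph already argues informally that the maximizer lies on the outer boundary at the point where it crosses $R_U=R_V$; I would first make this rigorous, and then locate that crossing explicitly.

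First I would describe the region precisely. Since any point with coordinates no larger than an achievable point is itself achievable, the region is the convex hull of $(x_1,y_1)$, $(x_2,y_2)$ and the coordinate axes; its outer boundary runs monotonically from $(0,\max(y_1,y_2))$ down to $(\max(x_1,x_2),0)$. Because $\min(R_U,R_V)$ is nondecreasing in each coordinate, its maximum is attained on this boundary. Along a non-increasing boundary, moving in either direction away from a point on $R_U=R_V$ strictly lowers one coordinate and hence lowers the min; since the boundary starts strictly above the diagonal and ends strictly below it, it meets the diagonal in a single point $(R,R)$, which is therefore the maximizer. The whole problem thus becomes: find the intersection of the frontier with the diagonal.

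Next I would identify the frontier's three candidate edges under the normalization $x_1\le x_2$ (which forces $y_1\ge y_2$ whenever there is a genuine tradeoff): a horizontal edge at height $\max(y_1,y_2)$, the sloped edge joining the two corner points, and a vertical edge at abscissa $\max(x_1,x_2)$. The sign of $(x_1-x_2)(y_1-y_2)$ decides whether one point dominates the other (same sign: the frontier degenerates to a rectangle corner, the crossing sits on an axis-parallel edge, and the value is $\min(\max(x_1,x_2),\max(y_1,y_2))$) or whether the points trade off (opposite sign: the sloped edge is genuinely present). In the tradeoff case the sign of $(x_1-y_1)(x_2-y_2)$ records whether the two corners lie on the same side of the diagonal: if they do, the sloped edge never meets the diagonal, the crossing falls on an outer axis-parallel edge, and the value is the corner value $\max(\min(x_1,y_1),\min(x_2,y_2))$; if they lie on opposite sides, the crossing is interior to the sloped edge.

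For that last subcase I would parametrize the sloped edge as $(x_1,y_1)+t\,[(x_2,y_2)-(x_1,y_1)]$, impose $R_U=R_V$ to solve for $t$, and back-substitute; the algebra collapses to $R=\dfrac{x_1y_2-y_1x_2}{y_2-y_1+x_1-x_2}$, matching the stated expression. Finally I would check that the three defining conditions and the three formulas are all invariant under relabeling $(x_1,y_1)\leftrightarrow(x_2,y_2)$, so that the normalization $x_1\le x_2$ is free of loss. The main obstacle is bookkeeping rather than depth: in each branch one must confirm that the computed crossing actually lies within the relevant edge (for instance that $\max(y_1,y_2)\le\max(x_1,x_2)$, or its reverse, holds exactly when the case hypotheses require), so that the intersection is genuinely on the frontier and not on its extension.
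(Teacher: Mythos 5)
Your proposal is correct and follows essentially the same route as the paper: both reduce the problem to locating the intersection of the line $R_U=R_V$ with the outer boundary of the (pentagonal or rectangular) region and then split into the same three cases governed by the signs of $(x_1-x_2)(y_1-y_2)$ and $(x_1-y_1)(x_2-y_2)$; you simply carry out in full the computations the paper dismisses as ``easily obtained'' from its figures. The only nitpick is that the minimum need not \emph{strictly} decrease when moving away from the crossing point (e.g.\ along a horizontal edge at the crossing height it stays constant), but this does not affect the conclusion that the maximum value equals the crossing value.
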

\begin{proof}
The first condition in (\ref{max_min_rate_eq}) corresponds to the case when one point is ``inside'' another point as demonstrated in Fig. \ref{rate_region}a). The condition is ($x_1 < x_2$ and $y_1 < y_2$) or ($x_2 < x_1$ and $y_2 < y_1$) which is equivalent to $(x_1 - x_2)(y_1 - y_2) > 0$. The minimum rate is therefore $\min(\max(x_1, x_2), \max(y_1, y_2))$.

The second condition in (\ref{max_min_rate_eq}) corresponds to the case when two points are at one side of the line $R_U = R_V$ as demonstrated in Fig. \ref{rate_region}b) and c). The third condition in (\ref{max_min_rate_eq}) corresponds to the case when each point is at one side of the line $R_U = R_V$ as demonstrated in Fig. \ref{rate_region}d). The minimum rate of these cases are easily obtained.
\end{proof}

If a rate region is formed based on more than two rate points, we select the maximum of all minimum rates of each point pair which are formed by combining two points out of all the points. It is the highest point among the black points as in shown in Fig. \ref{rate_region}e)
\begin{equation}
\max_{(R_U,R_V)} \min(R_U,R_V) = \max_{i,j}\mathrm{f}(x_i, x_j, y_i, y_j)
\end{equation}
\section{Numerical Results}
\label{numerical_results}
\begin{figure}
\centering
\includegraphics[width=0.6\columnwidth]{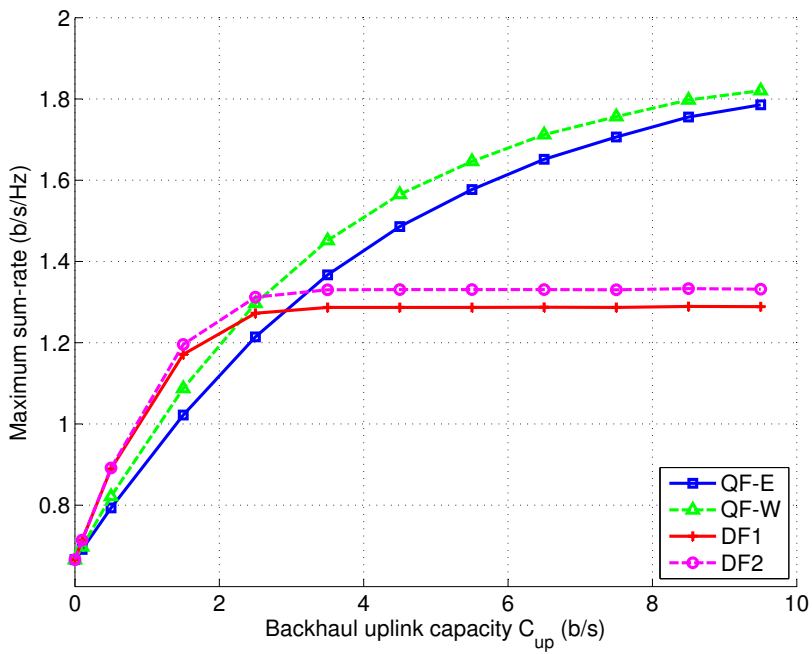}
\caption{Maximum sum--rate with $C_\mathrm{down} = C_\mathrm{up}$, $s_o = 150$m and varied $C_\mathrm{up}$.}
\label{backhaul_sumrate}
\end{figure}
\begin{figure}
\centering
\includegraphics[width=0.6\columnwidth]{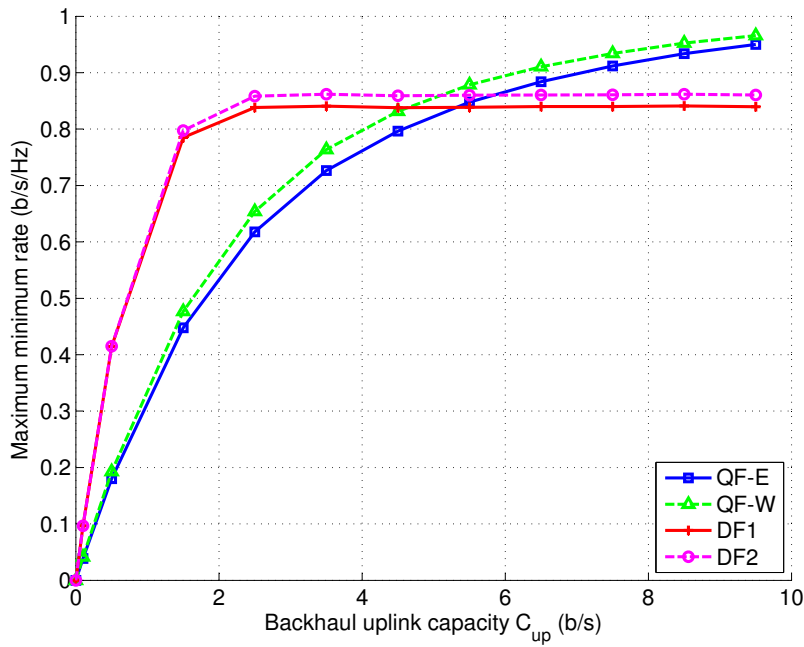}
\caption{Maximum minimum rate with $C_\mathrm{down} = C_\mathrm{up}$, $s_o = 150$m and varied $C_\mathrm{up}$.}
\label{backhaul_equalrate}
\end{figure}
\begin{figure}
\centering
\includegraphics[width=0.6\columnwidth]{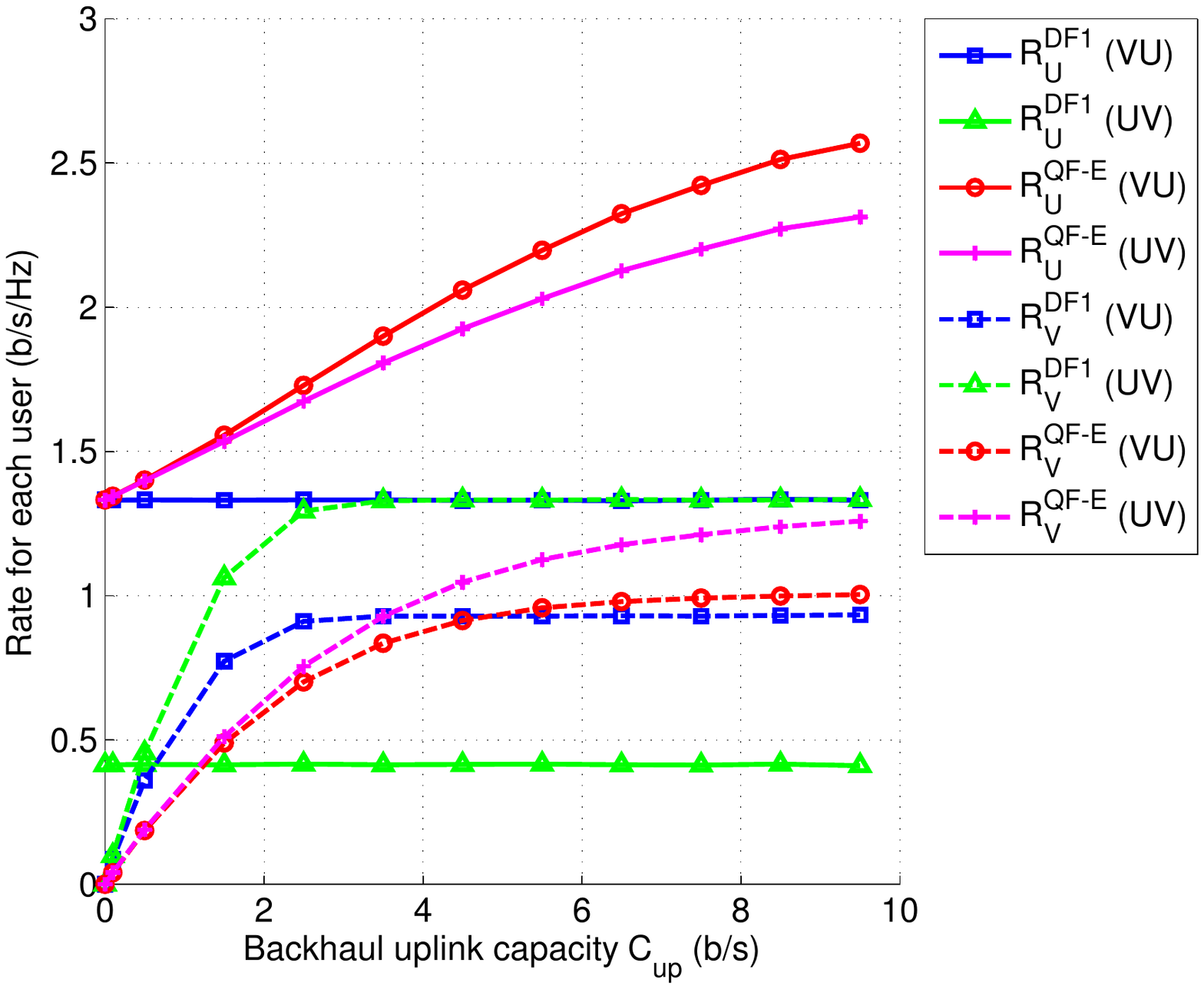}
\caption{Rate for each user with $C_\mathrm{down} = C_\mathrm{up}$, $s_o = 150$m and varied $C_\mathrm{up}$.}
\label{backhaul_userrate}
\end{figure}
We use Monte Carlo simulation to analyse the maximum sum-rate and the minimum rate of two users with cell radii $R_1 = 200$m, $R_2 = 20$m, path loss exponent $\alpha = 3$. In the first simulation, we vary $C_\mathrm{up}$ with fixed position of the FBS at $s_o = 150$m. The case with downlink backhaul, the DFQSI scheme, we assume $C_\mathrm{down} = 3C_\mathrm{up}$ to reflect the asymmetry of an ADSL link. Fig. \ref{backhaul_sumrate} and \ref{backhaul_equalrate} show the maximum sum-rate and the minium rate of the two users respectively. The DF schemes have a higher sum-rate when $C_\mathrm{up}$ is low and a lower sum-rate than the QF schemes when $C_\mathrm{up}$ is high. Let us compare the DF and QF schemes in the case of VU decoding order. As seen in (\ref{rv_df1_vu}), (\ref{ru_df1_vu}) and (\ref{ruv_qf_vu}), in most cases U is far away from the FBS thus $\gamma_{UF}$ is small and consequently $R^{QF}_U \approx R^{DF}_U$. It means that $R_U$ almost does not change and depends only on $\gamma_{UB}$. The difference between the DF and QF schemes therefore depends on the difference on $R_V$. At low $C_\mathrm{up}$, decoding at the FBS, the DF schemes avoid the quantization noise, which is high due to low $C_\mathrm{up}$ as seen in (\ref{backhaul_userrate}). The rate increases until reaching $\frac{\gamma_{VF}}{\gamma_{UF} + 1}$ and is upper bounded by this value. At high $C_\mathrm{up}$, the quantization noise, characterized by $\beta$, becomes negligible, the QF therefore continues to increase. The case with decoding order of UV can be explained in a similar way.

The superiority of a DF scheme compared to a QF scheme at low $C_\mathrm{up}$ in terms of the maximum minimum user rate is more obvious than in terms of sum--rate. This can be explained as follows. By quantizing the received signal and sending it to the MBS, in a QF scheme, the FBS actually facilitates the decoding of both $x_U$ and $x_V$ at the MBS for both decoding orders while, in a DF scheme, the backhaul uplink improves the decoding of $x_V$ only. Consequently, as in seen in Fig. \ref{backhaul_equalrate}, in both decoding orders of a QF scheme, $R_U$ is higher than $R_V$ while in a DF scheme, $R_U$ is higher than $R_V$ only in the decoding order of VU. This makes the maximum minimum user rate in a QF scheme lower at low $C_\mathrm{up}$.
\begin{figure}
\centering
\includegraphics[width=0.6\columnwidth]{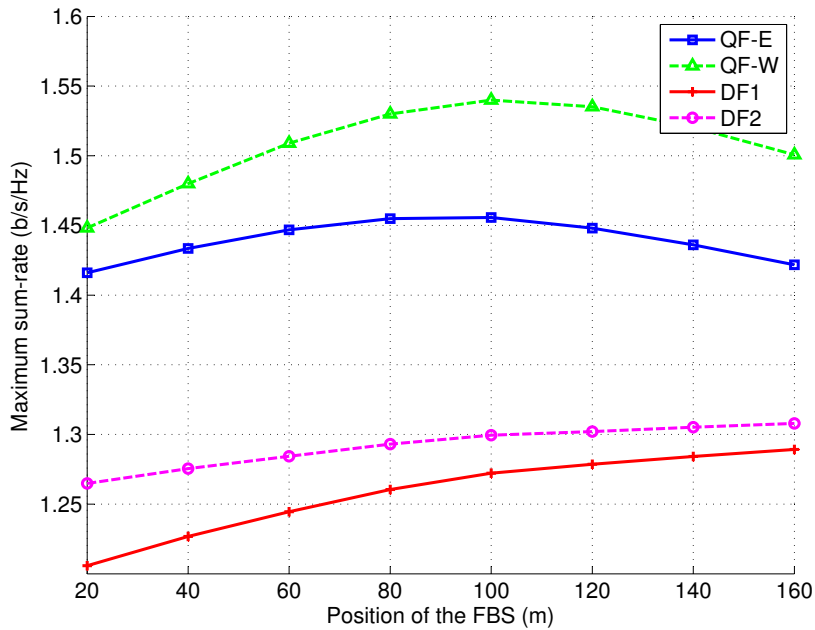}
\caption{Maximum sum--rate with $C_\mathrm{down} = 1$b/s/Hz, $C_\mathrm{up} = 4$b/s/Hz and varied $s_o = 150$.}
\label{fempos_sumrate}
\end{figure}
\begin{figure}
\centering
\includegraphics[width=0.6\columnwidth]{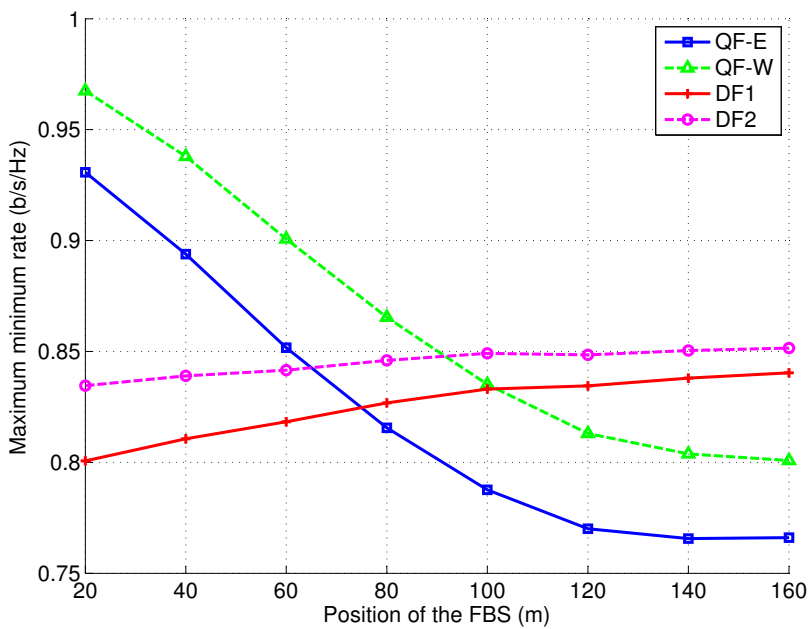}
\caption{Maximum minimum rate with $C_\mathrm{down} = 1$b/s/Hz, $C_\mathrm{up} = 4$b/s/Hz and varied $s_o = 150$.}
\label{fempos_equalrate}
\end{figure}
\begin{figure}
\centering
\includegraphics[width=0.6\columnwidth]{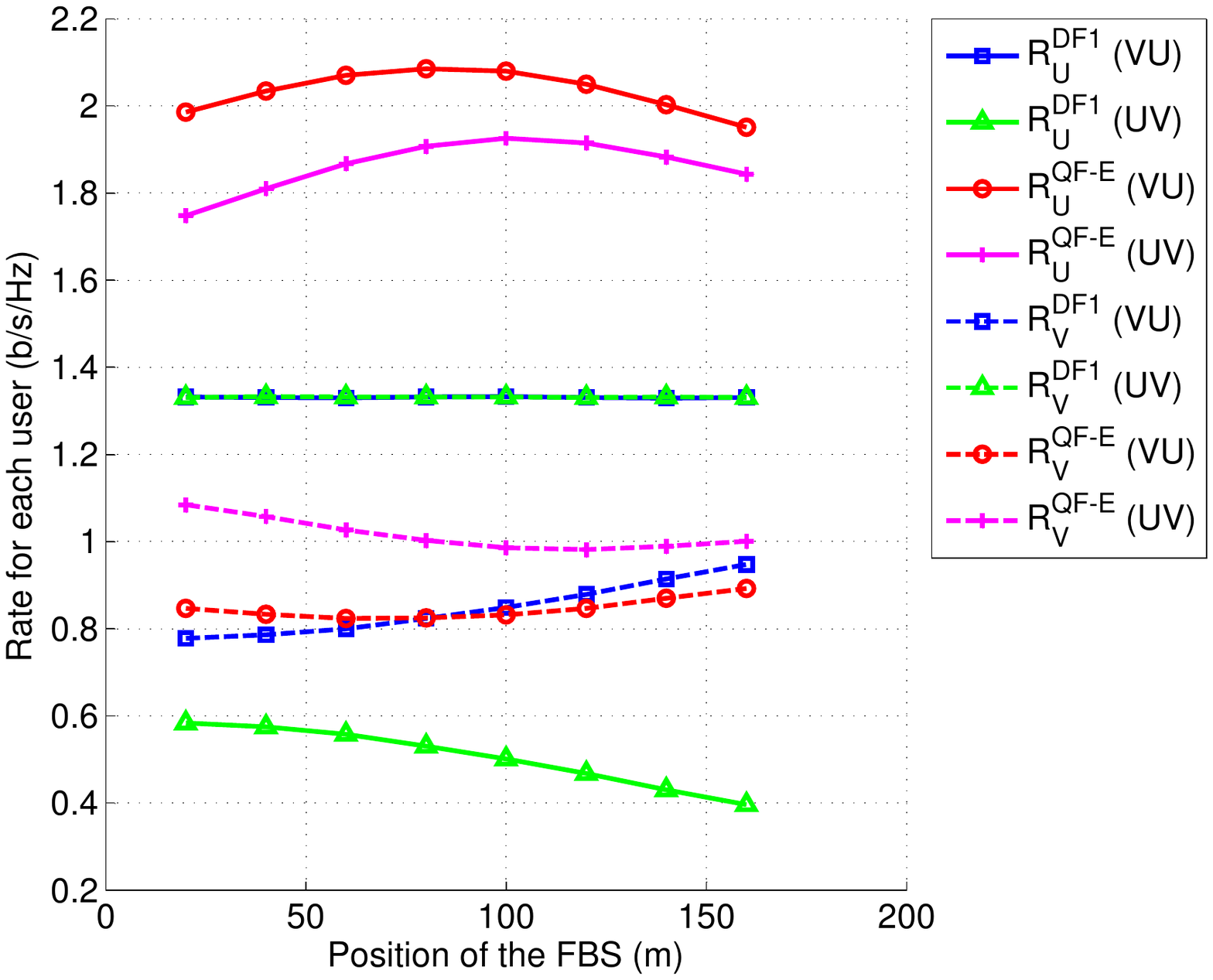}
\caption{Rate for each user with $C_\mathrm{down} = 1$b/s/Hz, $C_\mathrm{up} = 4$b/s/Hz and varied $s_o = 150$.}
\label{fempos_userrate}
\end{figure}

In the second simulation, we vary $s_o$ with fixed $C_\mathrm{up} = 4$ b/s/Hz. The case with downlink backhaul, the DFQSI scheme, we assume $C_\mathrm{down} = 1$ b/s/Hz to see the effect of a low-rate backhaul downlink. Fig. \ref{fempos_sumrate}, \ref{fempos_equalrate} and \ref{fempos_userrate} show the maximum sum-rate, the minimum rate of the two users and rate for each user respectively. When $s_o$ increases, the FBS goes to the edge of the cell, being uniformly randomized throughout the macrocell, on average, U becomes farther away from the FBS, while the U-MBS and V-FBS distances do not change thus $\gamma_{UF}$ decreases while $\gamma_{UB}$ and $\gamma_{VF}$ do not change. The change of $\gamma_{UF}$ affects the performance as follows. At high $s_o$, because $\gamma_{UF}$ is small, the decoding order of VU at the FBS is better than the decoding order of UV. Consequently, the sum-rate of the DF schemes increases correspondingly. On the other hand, in the QF schemes, $\gamma_{UF}$ plays a different role as it appears in both numerator and denominator of the SNR as seen in (\ref{ruv_qf_uv}) and (\ref{ruv_qf_vu}), the sum--rate therefore achieves a local maximum at a certain point. The minimum rate also has a similar tendency.
\section{Conclusion}
\label{conclusion}
We have proposed two novel quantization schemes, QF and DFQSI, which increase the performance of a two-tier network in terms of maximum sum--rate and minimum rate of two users. The comparison with a conventional scheme, DF, is considered and analyzed. The results show that the QF gives a much higher sum--rate compared to the DF schemes when the capacity of the backhaul uplink is high. If the backhaul uplink has a low capacity, the DF schemes are superior to the QF scheme. Therefore a better scheme can be chosen according to the capacity. On the other hand, the backhaul downlink also increases the sum--rate, in DFQSI, compared to the case it is not used at all, in DF.


\begin{thebibliography}{1}
\bibitem{hist}J. Andrews, H. Claussen, M. Dohler, S. Rangan and M. Reed, ``Femtocells: Past, present, and future,'' \textit{IEEE J. on Sel. Areas Commun.}, vol. 30, no. 3, pp. 497-508, Apr 2012.


\bibitem{asurvey} V. Chandrasekhar, J. G. Andrews, and A. Gatherer, ``Femtocell Networks: A Survey,'' \textit{IEEE Comm. Magazine}, vol. 46,  iss. 9, Sept 2008.

\bibitem{requirements} R. Kim, J. Kwak, and K. Etemad, ``WiMAX femtocell: requirements, challenges, and solutions,'' \textit{IEEE Communications Magazine}, vol. 47, pp. 84-91, Sep 2009.

\bibitem{standardization} D. Knisely, T. Yoshizawa, and F. Favichia ``Standardization of Femtocells in 3GPP,'' \textit{IEEE Comm. Magazine}, vol. 47,  iss. 9, pp.  68-75 Sept 2009.

\bibitem{standardization2} D. Knisely and F. Favichia ``Standardization of Femtocells in 3GPP2,'' \textit{IEEE Comm. Magazine}, vol. 47,  iss. 9, pp.  76-82 Sept 2009.

\bibitem{industry}S. Ortiz, ``The Wireless Industry Begins to Embrace Femtocells," \textit{Computer}, vol. 41, no. 7, pp. 14-17, July 2008.

\bibitem{autonomous} L. Garcia, K. Pedersen and P. Mogensen, ``Autonomous Component Carrier Selection: Interference Management in Local Area Environments for LTE-Advanced," \textit{IEEE Comm. Magazine}, vol. 47,  iss. 9, pp. 110-116, Sept 2009.

\bibitem{ofdma} D. Lopez-Perez, A. Valcarce, G. de la Roche, and Jie Zhang, ``OFDMA femtocells: a roadmap on interference avoidance,'' \textit{IEEE Communications Magazine}, vol. 47, pp. 41-48, Sep 2009.

\bibitem{avoidance} D. Lopez-Perez, G. Roche, A. Valcarce, Alpar Juttner, and J. Zhang, ``Interference Avoidance and Dynamic Frequency Planning for WiMAX Femtocells Networks,'' in \textit{Proc. IEEE  ICCS}, pp. 1579-1584, Singapore, Nov 2008.

\bibitem{spectrum} V. Chandrasekhar and J. Andrews, ``Spectrum Allocation in Tiered Cellular Networks,'' \textit{IEEE Trans. on Comm.}, vol. 57, no. 10, pp. 3059-3068, Oct 2009.

\bibitem{optimization} K. Han, Y. Choi, D. Kim, M. Na, S. Choi and Kiyoung Han, ``Optimization of femtocell network configuration under interference constraints,'' in Proc. \textit{IEEE WiOPT}, pp. 1-7, Seoul, Jun 2009.

\bibitem{hspa} M. Yavuz, F. Meshkati, and S. Nanda, A. Pokhariyal, N. Johnson, B. Raghothaman, and A. Richardson, ``Interference Management and Performance Analysis of UMTS/HSPA+ Femtocells,'' \textit{IEEE Comm. Magazine}, vol. 47,  iss. 9, pp. 102-109, Sept 2009.

\bibitem{downlink} X. Li, L. Qian, and D. Kataria, ``Downlink Power Control in Co-Channel Macrocell Femtocell Overlay,'' in Proc. \textit{IEEE 43rd CISS}, Mar 2009.

\bibitem{hybrid} I. Guvenc, Moo-Ryong Jeong, F. Watanabe, and H. Inamura, ``A Hybrid Frequency Assignment for Femtocells and Coverage Area Analysis for Co-Channel Operation,'' \textit{IEEE Comm. Lett.}, vol. 12,  iss. 12, pp. 880-882, Dec 2008.

\bibitem{uplink} V. Chandrasekhar and J. G. Andrews, ``Uplink Capacity and Interference Avoidance for Two-Tier Femtocell Networks,'' \textit{IEEE Trans. on Wireless Commun.}, vol. 8, no. 7, pp. 3498-3509, 2009.

\bibitem{dealing} D, Choi, P, Monajemi, S, Kang, and J, Villasenor, ``Dealing with Loud Neighbors: The Benefits and Tradeoffs of Adaptive Femtocell Access,'' in \textit{Proc. IEEE GLOBECOM} 2008, pp. 1-5., New Orleans, Nov 2008.

\bibitem{multiantenna} V. Chandrasekhar, M. Kountouris, and J. G. Andrews, ``Coverage in Multi-Antenna Two-Tier Networks,'' \textit{IEEE Trans on Wireless Comm.}, pp. 5314-5327 , vol. 8, no. 10, Oct 2009.

\bibitem{switching} H. Claussen, F. Pivit ``Femtocell Coverage Optimization using Switched Multi-element Antennas,'' in \textit{Proc. IEEE ICC}, pp. 1-6, Dresden, Germany, Jan 2009.

\bibitem{user-assisted} A. Schroder, H. Lundqvist, G. Nunzi and M. Brunner, ``User-assisted coverage and interference optimization for broadband femtocells,'' in \textit{Proc. IFIP/IEEE Int'l. Symp. on Integrated Network Management-Workshops},  pp. 199-204, Jun 2009.

\bibitem{mobility} L. Wang, Y. Zhang, and Z. Wei, ``Mobility Management Schemes at Radio Network Layer for LTE Femtocells,'' in \textit{Proc. IEEE VTC}, pp. 1-5, Barcelona, Apr 2009.

\bibitem{leakage} H. Claussen, L. Ho,  and L. Samuel, ``Self-optimization of Coverage for Femtocell Deployments,'' in \textit{Proc. IEEE WTS}, pp. 278-285, Pomona, California, Apr 2008.

\bibitem{statistics} ``Statistics of Macrocell-Synchronous Femtocell-Asynchronous Users-Delays for Improved Femtocell Uplink Receiver Design,'' \textit{IEEE Comm. Lett.}, vol. 13, no. 4, pp 239-241, Apr 2009.

\bibitem{pilotsensing} Tae-Hwan Kim and Tae-Jin Lee ``Throughput Enhancement of Macro and Femto Networks By Frequency Reuse and Pilot Sensing Performance,'' in \textit{Proc. IEEE IPCCC}, pp. 390-394, Austin, Texas, Dec 2008.

\bibitem{robust} O. Simeone, E. Erkip, S.S. Shitz, ``Robust Transmission and Interference Management for Femtocells with Unreliable Network Access,'' \textit{IEEE J. Select. Areas Commun.}, Vol. 28, pp. 1469-1478, Dec 2010.

\bibitem{chanthai} C. D. T. Thai and P. Popovski, ``Interference cancelation schemes for uplink transmission in femtocells,'' in \textit{Proc. IEEE GLOBECOM}, Miami, Nov. 2010.

\bibitem{qf1} S. Yao, M. Skoglund, T. Kim and H. Poor, ``Half-Duplex Relaying Based on Quantize-and-Forward,'' in \textit{Proc. IEEE ISIT}, pp. 2447-2451, Stockholm, Sweden, Aug 2011.

\bibitem{qf2}N. Ksairi, P. Ciblat, P. Bianchi and W. Hachem, ``Performance Analysis over Slow Fading Channels of a Half-Duplex Single-Relay Protocol:Decode or Quantize and Forward,'' \textit{IEEE Trans. on Communs.}, vol. 60, no. 7, pp. 2009-2016, Jul 2012.

\bibitem{outband} O. Simeone and E. Erkip, ``Gaussian Interference Channel Aided by a Relay with Out-of-Band Reception and In-Band Transmission,'' \textit{IEEE Trans. on Commun.}, vol. 59, iss. 11, pp. 2976-2981, Nov 2011.

\bibitem{secrecy} P. Popovski and O. Simeone, ``Wireless Secrecy in Cellular Systems with Infrastructure-Aided Cooperation,'' \textit{IEEE Trans. Information Forensics and Security}, vol. 4, iss. 2, pp. 242-256, Jun 2009.

\bibitem{icc2011} C. Thai, P. Popovski, M. Kaneko and E. Carvalho, ``Coordinated Transmissions to Direct and Relayed Users in Wireless Cellular Systems,'' in \textit{Proc. IEEE ICC}, pp. 1-5, Kyoto, Japan, Jun 2011.

\bibitem{kramer}G. Kramer, ``Topics in Multi-User Information Theory,'' DOI: 10.1561/0100000028, http://www.lnt.ei.tum.de/fileadmin/staff/kramer/Papers/kramerNOW07-2.pdf

\bibitem{overhearing_fansun_comlett}F. Sun, T. M. Kim, A. J. Paulraj, E. de Carvalho, and P. Popovski, ``Cell-Edge Multi-User Relaying with Overhearing," IEEE Commun. Lett., Vol. 17, No. 6, June 2013. 

\bibitem{fourway_huaping_lett}H. Liu, P. Popovski, E. de Carvalho, Y. Zhao and F. Sun, ``Four-Way Relaying in Wireless Cellular Systems," IEEE Wireless Commun. Lett., Vol. 2, No. 4, Aug. 2013.

\bibitem{dof_fansun_spawc}F. Sun, E. de Carvalho, ``Degrees of Freedom of Asymmetrical Multi-Way Relay Networks," IEEE International Workshop on Signal Processing Advances for Wireless Communications (SPAWC), 2011.

\bibitem{fourway_huaping_spawc}H. Liu, F. Sun, E. de Carvalho, P. Popovski, H. Thomsen, ``MIMO Four-Way Relaying," IEEE SPAWC, 2013 (Invited Paper).
    
\bibitem{mmse_fansun_siglett1}F. Sun, E. de Carvalho, ``A Leakage-Based MMSE Beamforming Design for a MIMO Interference Channel," IEEE Signal Process. Lett., Vol. 19, No. 6, June 2012.

\bibitem{mmse_fansun_siglett2}T. M. Kim, F. Sun, and A. J. Paulraj, ``Low-Complexity MMSE Precoding for Coordinated Multipoint with Per-Antenna Power Constraint," IEEE Signal Process. Lett., Vol. 20, No. 4, Apr. 2013. 
\end{thebibliography}
\end{document}